\newcommand{\longversion}[1]{}
\newcommand{\shortversion}[1]{#1}
\newcommand{\longshort}[2]{\longversion{#1}\shortversion{#2}}
\spnewtheorem{observation}{Observation}{\bfseries}{\itshape}
\newcommand{\NP}{\text{\normalfont NP}}
\newcommand{\FPT}{\text{\normalfont FPT}}
\newcommand{\XP}{\text{\normalfont XP}}
\newcommand{\W}[1][xxxx]{\text{\normalfont W}[#1]}
\newcommand{\ig}{interval graph\xspace}
\newcommand{\kgig}{$k$-gap interval graph\xspace}
\newcommand{\kgigs}{$k$-gap interval graphs\xspace}
\newcommand{\mir}{multiple interval representation\xspace}
\newcommand{\mirfirst}{multiple interval representation\xspace}
\newcommand{\gap}{\mathsf{gap}}
\newcommand{\ipkvg}{interval+$kv$ graph\xspace}
\newcommand{\ipkvgs}{in\-ter\-val\-+$kv$ graphs\xspace}
\newcommand{\idsfirst}{in\-ter\-val de\-le\-tion set\xspace}
\newcommand{\ids}{in\-ter\-val de\-le\-tion set\xspace}
\newcommand{\set}[1]{\left\{ #1 \right\}}
\newcommand{\myiff}{if and only if\xspace}
\newcommand{\etal}{\emph{et al.}\xspace}
\newcommand{\mc}{\mathcal}
\newcommand{\is}{in\-de\-pen\-dent set\xspace}
\newcommand{\IS}{\textsc{In\-de\-pen\-dent Set}\xspace}
\newcommand{\clique}{clique\xspace}
\newcommand{\CLIQUE}{\textsc{Clique}\xspace}
\newcommand{\MIT}{\textsc{Mul\-ti\-ple In\-ter\-val Trans\-ver\-sal}\xspace}
\newcommand{\ds}{do\-mi\-na\-ting set\xspace}
\newcommand{\DS}{\textsc{Do\-mi\-na\-ting Set}\xspace}
\newcommand{\IDS}{\textsc{In\-de\-pen\-dent Do\-mi\-na\-ting Set}\xspace}
\newcommand{\cc}{clique co\-ver\xspace}
\newcommand{\CC}{\textsc{Clique Co\-ver}\xspace}
\newcommand{\fvs}{feed\-back ver\-tex set\xspace}
\newcommand{\FVS}{\textsc{Feed\-back Ver\-tex Set}\xspace}
\newcommand{\COL}{\textsc{Co\-lo\-ring}\xspace}
\newcommand{\HC}{\textsc{Ha\-mil\-to\-nian Cycle}\xspace}
\newcommand{\tw}{{\mathbf{tw}}}
\newcommand{\pw}{{\mathbf{pw}}}
\title{$k$-Gap Interval Graphs \thanks{%
Serge Gaspers and Stefan Szeider acknowledge support from the European Research Council
(COMPLEX REASON, 239962). Petr Golovach acknowledges the support by EPSRC (EP/G043434/1), Royal Society (JP100692). Karol Suchan acknowledges support from Conicyt Chile (Anillo ACT-88, Basal-CMM, Fondecyt 11090390).%
}}
\author{%
Fedor V. Fomin\inst{1} \and
Serge Gaspers\inst{2} \and
Petr Golovach\inst{3} \and 
Karol Suchan\inst{4,} \inst{5} \and 
Stefan Szeider\inst{2} \and 
Erik Jan van Leeuwen\inst{1} \and 
Martin Vatshelle\inst{1} \and 
Yngve Villanger\inst{1}}
\authorrunning{Fomin \etal}
\institute{%
 Department of Informatics, University of Bergen, Bergen, Norway.\\
 \longshort{\email{\{fedor.fomin, e.j.van.leeuwen, martin.vatshelle, yngve.villanger\}@ii.uib.no}}{\email{\{fomin, e.j.van.leeuwen, vatshelle, yngvev\}@ii.uib.no}}
\and
 Inst. of Information Systems, Vienna University of Technology, Vienna, Austria.\\
 \email{gaspers@kr.tuwien.ac.at}, \email{stefan@szeider.net}
\and
 School of Engineering and Computing Sciences, Durham University, Durham, UK.\\
 \email{petr.golovach@durham.ac.uk}
\and
 Facultad de Ingenier{\'i}a y Ciencias, Universidad Adolfo Ib{\'a}{\~n}ez, Santiago, Chile.\\
 \email{karol.suchan@uai.cl}
\and
 \longshort{Faculty of Applied Mathematics}{WMS}, AGH - University of Science and Technology, Krakow, Poland.
}
\begin{document}

\maketitle

\begin{abstract}
We initiate the study of a new parameterization of graph problems.
In a \mir of a graph, each vertex is associated to at least one interval of the
real line, with an edge between two vertices \myiff an interval associated to
one vertex
has a nonempty intersection with an interval associated to the other vertex. A
graph on $n$ vertices is a $k$-gap interval graph if it has a \mir with at most
$n+k$ intervals
in total. In order to scale up the nice algorithmic properties of interval
graphs (where $k=0$), we parameterize graph problems by $k$, and find \FPT\
algorithms for several problems, including \FVS, \DS, \IS, \CLIQUE, \CC, and
\MIT. The \COL problem turns out to be \W[1]-hard and we design
an \XP\ algorithm for the recognition problem.
\end{abstract}

\section{Introduction}
A \mirfirst $f$ of a graph $G=(V,E)$ is a mapping which assigns to each vertex of $G$ a
non-empty collection of intervals on the real line so that two distinct vertices
$u$ and $v$ are adjacent \myiff there are intervals $I \in f(u)$ and $J \in
f(v)$ with $I \cap J \neq \emptyset$. For a vertex $v$, $|f(v)|$ denotes the
number of intervals that $f$ assigns to $v$. The \emph{interval number}
of $G$ is defined as
\begin{align*}
i(G) = \min \set{\max_{v\in V} \set{ |f(v)|} : f \text{ is a \mir of }
G}\enspace.
\end{align*}
The \emph{total interval number} of a graph $G=(V,E)$ is defined as
\begin{align*}
I(G) = \min \set{\sum_{v\in V} \set{ |f(v)|} : f \text{ is a \mir of }
G}\enspace.
\end{align*}
The class of \emph{$t$-interval graphs} is defined as the class of all graphs
$G$ with $i(G)\le t$.
This natural generalization of interval graphs was independently introduced by
Trotter and Harary
\cite{TrotterH79}, and by Griggs and West \cite{GriggsW80}.

Even for small fixed $t\ge 2$, these graph classes are much richer than interval
graphs. For example, the class of $2$-interval graphs
includes circular-arc graphs, outerplanar graphs, cubic graphs, and line graphs,
and the class of $3$-interval graphs includes all planar graphs
\cite{ScheinermanW83}.
Unfortunately, many problems remain \NP-hard on $2$-interval graphs (for
example, their recognition \cite{WestS84}, 3-\COL, \DS,
\IS, and \HC) or $3$-interval graphs (for example \CLIQUE
\cite{ButmanHLR10}, whose
complexity on $2$-interval graphs is open \cite{ButmanHLR10,Spinrad03}).
Parameterized by solution size, \IS, \DS, and \IDS
are $W[1]$-hard on $2$-interval graphs, even when all intervals have unit
length, whereas \CLIQUE is FPT \cite{FellowsHRV09}.

\longshort{\medskip}{\smallskip}

With the objective to generalize interval graphs while maintaining their nice
algorithmic properties, we define
\emph{$k$-gap interval graphs} as graphs that have a \mir whose total number of
\longversion{individual }intervals exceeds the
number of vertices by at most $k$. Parameterizing problems by $k$ becomes then a
reasonable attempt to
scale up the nice algorithmic properties of interval graphs to more general
graphs.

\begin{definition}
 A graph $G$ on $n$ vertices is a \emph{\kgig{}} if $I(G) \le n+k$.
\end{definition}
Throughout this paper, we assume that problems that have a \kgig as input also
have access to the corresponding \mir.
An alternative definition of \kgigs is via the identification operation.

\begin{definition}
 Given a graph $G=(V,E)$ and two vertices $a,b\in V$, the graph obtained from
$G$ by \emph{identifying} $a$ and $b$
 is the graph obtained from $G-b$ by adding all edges $\set{va : vb\in E}$.
\end{definition}

\begin{observation}\label{obs:ident}
 A graph is a \kgig \myiff it can be obtained from an \ig by a sequence of at
most $k$
 operations of identifying pairs of vertices.
\end{observation}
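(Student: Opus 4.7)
The plan is to prove both directions via the natural correspondence between the intervals of a \mir and the vertices of an underlying \ig.

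For the forward direction, I would start from a \kgig $G$ on $n$ vertices, equipped with a \mir $f$ of total size $\sum_{v\in V(G)} |f(v)| \le n+k$. I would build an auxiliary graph $H$ whose vertex set consists of the individual interval occurrences of $f$ (so $|V(H)|\le n+k$), and whose edges are defined by interval intersection; by construction $H$ is an \ig. Then, for each $v\in V(G)$, I would perform $|f(v)|-1$ identifications to merge all the vertices of $H$ that represent the intervals of $f(v)$ into a single vertex labelled $v$, processing $v$ by $v$. The total number of identifications is $\sum_{v\in V(G)} (|f(v)|-1) = I(G) - n \le k$, and the resulting graph equals $G$ because two merged vertices $u,v$ are adjacent \myiff some interval in $f(u)$ meets some interval in $f(v)$, which is exactly the adjacency of $G$.

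For the reverse direction, I would argue by induction on the number $j\le k$ of identifications used to obtain $G$ from some \ig. The base case $j=0$ is trivial. For the inductive step, let $G'$ be the graph just before the last identification, which merges two vertices $a$ and $b$; then $|V(G')|=n+1$. By the inductive hypothesis, $G'$ admits a \mir $f'$ with at most $(n+1)+(j-1)=n+j$ intervals. I would define $f$ on $V(G)$ by $f(a) := f'(a) \cup f'(b)$ and $f(w) := f'(w)$ for every other vertex $w$. Then $f$ has at most $n+j \le n+k$ intervals in total, and verifying that $f$ is a \mir of $G$ reduces to the elementary fact that some interval of $f'(w)$ meets the union $f'(a)\cup f'(b)$ \myiff it meets $f'(a)$ or $f'(b)$, which mirrors exactly the effect of identifying $a$ and $b$ on the neighborhoods.

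I do not anticipate a substantive obstacle: both directions rely on the same elementary observation that graph identification corresponds precisely to taking the union of interval collections in a \mir, so the argument is essentially a bijection between two natural encodings of the same structural measure. The only minor care required is to process the identifications (forward direction) or the inductive peeling (reverse direction) one pair at a time, which is why the reverse direction is phrased inductively rather than by a single direct construction.
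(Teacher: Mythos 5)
Your proof is correct; the paper states this as an Observation without proof, and your two-directional argument (intervals-as-vertices plus identification in one direction, unioning interval collections under an induction on the number of identifications in the other) is precisely the intended justification, with the counting $\sum_{v}(|f(v)|-1)=I(G)-n\le k$ capturing the equivalence exactly. The only point worth a passing remark is that when two intervals of the same vertex intersect, the corresponding edge of your auxiliary interval graph becomes a loop under identification and is discarded, which is harmless since the paper works with simple graphs.
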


\noindent
Several of our \FPT\ results do not make use of the special structure of the vertices with gaps,
and also hold for the graph class interval+$kv$.

\begin{definition}
 A graph $G=(V,E)$ is an \emph{\ipkvg{}} if there is a vertex set
 $X\subseteq V$, with $|X|\le k$, such that $G\setminus X$ is an interval graph.
 We refer to such a vertex set $X$ as the \emph{\idsfirst{}} of $G$.
\end{definition}

\noindent
When discussing the complexity of problems on \ipkvgs, we assume that an \ids
is provided as a part of the input, as it is an open
question whether \textsc{Interval Vertex Deletion} is \FPT~\cite{Marx10}.
As the set of vertices that are associated to more that one interval in a \mir is an
\ids, \FPT\ results for \ipkvgs translate to \FPT\ results for \kgigs.
When the generalization of the \FPT\ result for \kgigs to \ipkvgs is straightforward,
we state the stronger result.

\paragraph{Related work.}
The class of $t$-interval graphs has been studied from the view point of
approximation algorithms \cite{Bar-YehudaHNSS06,Bar-YehudaR06,ButmanHLR10},
graph theory
\longshort{\cite{Andreae86,Andreae87,BaloghOP04,BaloghP99,BaloghP02,ChenCW04a,ErdosW85,Griggs79,GriggsW80,Gyarfas85,HopkinsTW84,Maas84,Pluhar95,Scheinerman87,Scheinerman88,Scheinerman90,ScheinermanW83,SpinradVW87,TrotterH79,West89}}{%
(see, for example \cite{Andreae86,BaloghOP04} \cite{ErdosW85,GriggsW80,ScheinermanW83,TrotterH79,West89} and references)},
classical complexity \cite{WestS84}, and parameterized complexity
\cite{BlinFV07,FellowsHRV09,JiangZ11,JiangZ11b}.
Applications have been identified in scheduling and resource allocation
\cite{Bar-YehudaHNSS06,Bar-YehudaR06},
communication protocols \cite{ButmanHLR10}, computational biology
\cite{AumannLMPY05,BafnaNR96,BlinFV07,ChenYY07,CrochemoreHLRV08,GambetteV07,Vialette04,Vialette08},
and monitoring \cite{ButmanHLR10}.
The total interval number was introduced in \cite{GriggsW80} and studied in
\cite{AndreaeA89,Catlin92,ChenC02,KostochkaW97,KratzkeW93,KratzkeW96,Raychaudhuri95}.

\paragraph{Notation and definitions.}
Let $G=(V,E)$ be a graph, $u\in V$ be a vertex, and $S\subseteq V$ be a subset
of vertices. The \emph{open neighborhood} of $v$ is defined as $N(u) = \set{v :
uv\in E}$,
its \emph{closed neighborhood} is $N[u] = N(u) \cup \set{u}$, and its degree is
$d(u)=|N(u)|$. Also, $N(S) = \bigcup_{v\in S} N(v) \setminus S$ and
$N[S] = N(S) \cup S$.
The graph $G\setminus S$ is obtained from $G$ by removing all vertices
in $S$ and all edges that have at least one endpoint in $S$.
The graph induced on $S$ is $G[S] = G
\setminus (V \setminus S)$.
By $n$ and $m$ we generally denote the number of vertices and edges of $G$.

In a \kgig $G=(V,E)$ with \mir $f$ we say that a vertex $v \in V$ has a gap if
$|f(v)|\ge 2$. We denote by $\gap_f(G)$ the set of
vertices that have gaps and omit the subscript if the context specifies the
\mir. We say that a \mir of $G$ has $k$ gaps if $\sum_{v\in V} |f(v)| = |V|+k$.

A \emph{tree decomposition} of a graph $G$ is a pair $(B,T)$ where $T$
is a tree and $B=\{B_{i} \mid i\in V(T)\}$ is a collection of subsets (called
{\em bags})
of $V(G)$ such that:
\begin{enumerate}
\item $\bigcup_{i \in V(T)} B_{i} = V(G)$, 
\item for each edge $\{x,y\} \in E(G)$, $x,y\in B_i$ for some  $i\in V(T)$, and 
\item for each $x\in V(G)$ the set $\{ i \in V(T) : x \in B_{i} \}$ induces a
connected subtree of $T$.
\end{enumerate}
The \emph{width} of a tree decomposition $(\{ B_{i} \mid i \in V(T) \},T)$ is
$\max_{i \in V(T)}\,\{|B_{i}| - 1\}$. The \emph{treewidth} of a graph $G$
(denoted $\tw(G)$) is the minimum width over all tree decompositions of $G$. 
If, in the above definitions, we restrict \longversion{the tree }$T$ to be a path
then we define the notions of \emph{path decomposition} and {\em
pathwidth} of $G$ (denoted $\pw(G)$).

We refer to \cite{DowneyF99,FlumG06,Niedermeier06} for texts on parameterized
complexity, the theoretical framework of our investigations.
\shortversion{Proofs of statements marked with \textbf{($\star$)} can be found in Appendix \ref{sec:appOmitted}.}

\section{Recognition}

The problem of recognizing \kgigs is to determine whether for a graph $G$ on $n$ vertices, $I(G) \leq n + k$.
If $k$ is part of the input, the problem is \NP-hard, as it is \NP-hard to decide whether $I(G) \leq n + (m+1-n)$, even if $G$ is a planar, $3$-regular, triangle-free graph~\cite{KratzkeW96}. We show however that the problem is polynomial-time decidable if $k$ is a constant.
First, we need a bound on the number of maximal cliques in \kgigs.

\subsection{Maximal Cliques}

A \emph{\clique{}} in a graph $G$ is a set of vertices that are all pairwise adjacent in $G$.
A \clique is \emph{maximal} if it is not a subset of another clique.

\begin{restatable}[\shortversion{$\star$}]{lemma}{Replemubcliques}\label{lem:ubcliques}
 An \ipkvg on $n$ vertices has at most $2^k \cdot (n-k)$ maximal cliques.
\end{restatable}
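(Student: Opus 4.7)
The plan is to partition the maximal cliques of $G$ by their intersection with the \idsfirst{} $X$ and, for each part, invoke the classical fact that any interval graph on $m$ vertices contains at most $m$ maximal cliques (since it is chordal).

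Fix an \ids $X$ of $G$ with $|X| \le k$; I may assume $n \ge k+1$, as otherwise the bound is trivial. For every subset $S \subseteq X$ I would introduce the auxiliary graph
\[ H_S := G\bigl[\set{ v \in V \setminus X : S \subseteq N_G(v) }\bigr], \]
which is an induced subgraph of the interval graph $G \setminus X$ and therefore itself an interval graph. The core step is the claim that the assignment $C \mapsto C \setminus S$ is an injection from $\set{C : C \text{ maximal clique of } G,\ C \cap X = S}$ into the set of maximal cliques of $H_S$. Membership in $V(H_S)$ is immediate because every $v \in C \setminus S$ is adjacent to all of $S \subseteq C$; maximality holds because any vertex of $H_S$ extending $C \setminus S$ would, by the definition of $H_S$, also be adjacent to every vertex of $S$, contradicting the maximality of $C$ in $G$. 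This is the only place where the argument could go wrong, and it is the main (mild) obstacle.

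With the correspondence in hand, the number of maximal cliques of $G$ is at most $\sum_{S \subseteq X} |V(H_S)|$, and since each $H_S$ is an interval graph, one could also bound this by $\sum_{S \subseteq X} |V(H_S)|$ directly. Swapping the order of summation, each $v \in V \setminus X$ contributes to $|V(H_S)|$ for exactly the $2^{|N(v) \cap X|}$ subsets $S \subseteq N(v) \cap X$, so
\[ \sum_{S \subseteq X} |V(H_S)| \;=\; \sum_{v \in V \setminus X} 2^{|N(v) \cap X|} \;\le\; (n - |X|) \cdot 2^{|X|}. \]
A final monotonicity check, comparing $(n-j)\cdot 2^j$ for $j = |X|$ and $j = k$, shows that the right-hand side is maximized at $|X| = k$ as long as $n \ge k+1$, which yields the desired bound $2^k(n-k)$.
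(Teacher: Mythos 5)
Your proof is correct and follows essentially the same route as the paper's: both partition the maximal cliques of $G$ by their intersection $S$ with the \ids $X$ and observe that stripping $S$ from such a clique yields a maximal clique of the interval graph induced on the common neighbours of $S$ outside $X$, which has at most $n-k$ vertices. Your refined double-counting of $\sum_{S}|V(H_S)|$ is a pleasant sharpening but unnecessary for the stated bound (and, like the paper's own count, it silently glosses over the degenerate case of a maximal clique contained entirely in $X$, where $H_S$ is empty).
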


\longversion{\shortversion{\Replemubcliques*}

\begin{proof}
 Let $G=(V,E)$ be an \ipkvg and $X$ be an \ids of size $k$.
 Let $Y\subseteq X$ and consider all maximal cliques of $G$ whose intersection with $X$ is exactly $Y$.
 For any such maximal clique $S$, $S\setminus Y$ is a maximal clique of $G[\bigcap_{v\in Y} N(v) \setminus X]$.
 As $G[\bigcap_{v\in Y} N(v) \setminus X]$ is an interval graph with at most $n-k$ vertices, there are at most $n-k$ choices for $S \setminus Y$. As there are $2^k$ choices for $Y$, the lemma follows.
\qed \end{proof}

}

On the other hand, Lemma \ref{lem:ubcliques} cannot be substantially improved, even for \kgigs, as there are \kgigs with $\Omega(2^k)$ maximal cliques.
Figure \ref{fig:cliques} represents a \mir with $k$ gaps
of a graph $G=(V,E)$ with vertex set $V=\set{a_0, \cdots, a_k, b_0, \cdots, b_k}$ and an edge between every pair of distinct vertices except $a_i$ and $b_i$, $0\le i\le k$.
Any vertex set containing exactly one of $a_i, b_i$, $0\le i\le k$ forms a maximal clique. Thus, this graph has $2^{k+1}$ maximal cliques.

\begin{figure}[tb]
 \begin{center}
 \begin{tikzpicture}[xscale=1.5,yscale=0.7]
  \tikzset{ivl/.style={thick},
  ivl/.default=black,
  var/.style={midway,above=0.5pt}}
 
  \draw (0,0) node {level $k$};
  \draw[ivl,red,|-] (1,0) -- node[var] {$b_k$} (3.75,0);
  \draw[ivl,|-|] (3.85,0) -- node[var] {$a_k$} (6.75,0);
  \draw[ivl,red,-|] (6.85,0) -- node[var] {$b_k$} (7,0);

  \draw (0,1) node[rotate=90] {$\cdots$};\draw (4,1) node[rotate=90] {$\cdots$};

  \draw (0,2) node {level $2$};
  \draw[ivl,red,|-] (1,2) -- node[var] {$b_2$} (1.35,2);
  \draw[ivl,|-|] (1.45,2) -- node[var] {$a_2$} (4.35,2);
  \draw[ivl,red,-|] (4.45,2) -- node[var] {$b_2$} (7,2);

  \draw (0,3) node {level $1$};
  \draw[ivl,red,|-] (1,3) -- node[var] {$b_1$} (1.15,3);
  \draw[ivl,|-|] (1.25,3) -- node[var] {$a_1$} (4.15,3);
  \draw[ivl,red,-|] (4.25,3) -- node[var] {$b_1$} (7,3);

  \draw (0,4) node {level $0$};
  \draw[ivl,|-|] (1,4) -- node[var] {$a_0$} (3.95,4);
  \draw[ivl,|-|] (4.05,4) -- node[var] {$b_0$} (7,4);
 \end{tikzpicture}
 \end{center}
 \caption{\label{fig:cliques} A \mir with $k$ gaps of a graph with $2^{k+1}$ maximal cliques.}
\end{figure}

\subsection{PQ-trees}

To recognize \kgigs, we make use of PQ-trees. A \emph{PQ-tree} is a rooted tree $T$ that represents allowed permutations over a set\longversion{ of elements} $U$. Each leaf\longversion{ of $T$} corresponds to a unique element of $U$. Internal nodes\longversion{ of $T$} are labeled P or Q. The children of an internal node $v$ appear in a particular order, which can be modified depending on the label of $v$. The order can be reversed if the label is Q, and it can be arbitrarily changed if the label is P. In this way, the tree imposes various restrictions on the order in which the leaves appear.
PQ-trees were famously used to provide a linear-time recognition algorithm for interval graphs~\cite{BoothL76}.

Booth and Lueker~\cite{BoothL76} introduced PQ-trees, together with a \emph{reduction algorithm}. This algorithm, given a PQ-tree $T$ and a collection $\mathbb{S}$ of sets, restricts the set of permutations represented by $T$ to those in which the elements of each $S \in \mathbb{S}$ appear consecutively. It runs in time $O(|U| + |\mathbb{S}| + \sum_{S \in \mathbb{S}}|S|)$.
\longversion{We will use this algorithm in subroutines later.}

Our recognition algorithm for \kgigs will construct a PQ-tree $T$ and add additional constraints to $T$. We describe these constraints now and propose an algorithm to check whether they can be met by $T$.
First, we give some notation.
We say that $u \in U$ is to the \emph{left} of $v \in U$ in $T$ if the order of the leaves induced by $T$ is such that $u$ comes before $v$. We can then also define \emph{right}, \emph{leftmost}, and \emph{rightmost} in a natural way. We say that a set of leaves is \emph{consecutive} in $T$ if they appear consecutively in the order of the leaves induced by the tree.

We now give the type of constraints that we will impose on PQ-trees. 
A PQ-tree $T$ over $U$ \emph{satisfies a partition constraint} 
$B=(i, u^{1}_{L},u^{1}_{R},\ldots,u^{i}_{L}, u^{i}_{R}, S)$, where $\{u^{1}_{L},u^{1}_{R},\ldots,u^{i}_{L}, u^{i}_{R}\} \subseteq S \subseteq U$, if $S$ can be partitioned into $S_{1},\ldots,S_{i}$ such that each $S_{j}$ is consecutive, $u^{j}_{L}$ is the leftmost leaf of $S_{j}$, and $u^{j}_{R}$ is the rightmost leaf of $S_{j}$. 
Moreover, $S_{j}$ appears to the left of $S_{j+1}$ for all $1 \leq j < i$. We use $S_B$ to denote the set $S \backslash \{u^{1}_{L},u^{1}_{R},\ldots,u^{i}_{L}, u^{i}_{R}\}$.

We show that, given a PQ-tree and a set of partition constraints, we can decide in polynomial time whether the leaves of the PQ-tree can be reordered to satisfy these constraints. If so, our algorithm finds the order and the partitions $S_{1},\ldots,S_{i}$ for each of the constraints.

\begin{restatable}[\shortversion{$\star$}]{lemma}{Replempqt}\label{lem:pq-tree-mod}
Let $\mc{Z} = \{B_1,\cdots,B_\ell\}$ be a set of partition constraints such that the sets $S_{B_j}$ are pairwise disjoint. It can be decided in $(|\mc{Z}| \cdot n)^{O(1)} $ time if there exists a valid ordering of the leaves of a PQ-tree $T$ satisfying all constraints in $\mc{Z}$.
\end{restatable}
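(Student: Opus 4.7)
The plan is to process the partition constraints $B_1,\ldots,B_\ell$ one at a time, maintaining a working PQ-tree $T$ whose admissible leaf orderings are exactly those consistent with the constraints processed so far. The disjointness of the sets $S_{B_j}$ ensures that each $B_k$ only constrains its own endpoints and its own free elements, so different $B_k$'s interact only through the shared tree structure; this justifies the sequential treatment. For each $B_k$ I either detect infeasibility and abort, or apply a bounded sequence of operations to $T$ that restrict it to the orderings additionally satisfying $B_k$. If all $B_k$ are processed successfully, any leaf ordering of the final tree yields the required partitions.

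For a single constraint $B = (i, u^1_L, u^1_R, \ldots, u^i_L, u^i_R, S)$, the plan is based on the characterization that an ordering $\pi$ satisfies $B$ if and only if (a) the $2i$ endpoints occur in $\pi$ in the specific left-to-right order $u^1_L, u^1_R, u^2_L, u^2_R, \ldots, u^i_L, u^i_R$, and (b) no element of $U\setminus S$ lies strictly between $u^j_L$ and $u^j_R$ for any $j$. Given (a) and (b) the partition is recovered by assigning each $v\in S_B$ to the unique slot whose open interval contains $v$; conversely, any valid partition delivers such a $\pi$. Condition (b) reduces to enforcing the consecutive-set constraints $\{u^j_L, u^j_R\}\cup S$ for $j=1,\ldots,i$, each of which can be handled by the standard Booth--Lueker reduction in polynomial time. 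Condition (a), being directional, is treated separately by a traversal of the reduced PQ-tree: for each P-node I select the cyclic arrangement of its relevant children compatible with the required endpoint sequence, and for each Q-node that carries endpoints I record a constraint specifying which of the two reflections is compatible. Collecting all such orientation constraints (across all $B_k$) into a single 2-SAT instance yields a polynomial-time feasibility test.

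The main obstacle I anticipate is enforcing the directional part of each constraint --- condition (a) --- because Booth--Lueker reduction manipulates only undirected consecutiveness. My plan resolves this by translating the directional requirements into 2-SAT-style constraints on the orientation choices at Q-nodes, exploiting the fact that at a fixed Q-node the relative order of any two endpoints within the same subtree is determined by the orientations of deeper Q-nodes, while their relative order across sibling subtrees depends on the Q-node's own reflection. Since each relative-order requirement contributes clauses involving at most two Q-node orientation variables, the resulting 2-SAT instance has size polynomial in $|\mc{Z}|\cdot n$ and is solvable in polynomial time. The overall algorithm therefore runs in $(|\mc{Z}|\cdot n)^{O(1)}$ time and either outputs a leaf ordering of $T$ realizing all constraints, together with the explicit partitions $S_1,\ldots,S_i$ for each $B_k$, or reports that no such ordering exists.
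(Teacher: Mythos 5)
There is a genuine gap, and it sits at the heart of your reduction of condition (b) to Booth--Lueker consecutiveness constraints. By the paper's definition, $\{u^{1}_{L},u^{1}_{R},\ldots,u^{i}_{L},u^{i}_{R}\}\subseteq S$, so each set $\{u^{j}_{L},u^{j}_{R}\}\cup S$ is simply $S$ itself: your reduction step amounts to forcing \emph{all} of $S$ to be consecutive. That is far too strong. The partition constraint only requires each block $S_j$ to be consecutive and the blocks to appear in order; elements of $U\setminus S$ may (and in the paper's intended application, the displays of a clique in the recognition algorithm, typically do) separate $S_j$ from $S_{j+1}$. So your algorithm would answer \textsc{No} on yes-instances. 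The deeper problem is that condition (b) cannot be expressed as consecutiveness of any \emph{fixed} family of sets, because which elements of $S_B$ land in which slot $(u^{j}_{L},u^{j}_{R})$ is itself a degree of freedom of the problem; this is exactly the combinatorial difficulty the lemma is about. Relatedly, your characterization (a)+(b) is incomplete even as a characterization: it omits the covering requirement that \emph{every} element of $S_B$ must lie strictly between $u^{j}_{L}$ and $u^{j}_{R}$ for some $j$ (otherwise the ``recovered partition'' fails to partition $S$), so an ordering satisfying (a) and (b) need not satisfy $B$.

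For comparison, the paper does not try to massage the constraints into consecutiveness reductions at all. It performs a single bottom-up pass over the given PQ-tree: at each P-node it first orders the relevant children containing endpoints $u^{j}_{L},u^{j}_{R}$ so as to respect the prescribed left-to-right sequence of endpoints, rejects if this is impossible, and then inserts the remaining relevant children --- each of whose leaf sets must lie entirely inside some $S_{B_j}$, which is well defined precisely because the $S_{B_j}$ are pairwise disjoint --- into the appropriate slots; at each Q-node it just tests both reflections. Your idea of handling the directional part via orientation variables at Q-nodes (a 2-SAT-like pass) is workable in principle and is essentially a reformulation of the paper's ``try both orders at a Q-node'' step, but the proposal as written fails before reaching that stage because of the incorrect treatment of (b) and the missing covering condition.
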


\longversion{\shortversion{\Replempqt*}

\begin{proof}
For a node $w \in V(T)$, let $T_{w}$ denote the subtree of $T$ rooted at $w$ and let $L_{w}$ denote the set of leaves of $T_{w}$.
We call a partition constraint $(i,\ldots,S)$ \emph{relevant} at $w$ if $1 \leq |S \cap L_{w}| < |S|$. We call a constraint \emph{active} at $w$ if it is relevant at $w$ or at a child of $w$, and \emph{inactive} otherwise. By preprocessing the tree in polynomial time, we can easily decide whether a given constraint is active for a vertex $w \in V(T)$ or not.

We fix a correct ordering bottom-up. Let $w$ be an internal vertex of type P. First, observe that we only need to permute relevant children of $w$. Consider first all relevant children of $w$ that have a leaf $u_{L}^{i}$ or $u_{R}^{i}$ for some $i$ and some constraint as a descendant. We permute these relevant children so that they respect the ordering prescribed by the constraints. 
That is, $u_{L}^{i}$ comes immediate before $u_{R}^{i}$ and before $u_{\cdot}^{j}$ for $j > i$ and after $u_{\cdot}^{j}$ for all $j < i$ for each constraint in $\mc{Z}$. As a consequence a leaf $u_R^i$ in $T_w$ with the corresponding $u_L^i$ not in $T_w$ would have to be the first in the ordering, likewise we might find the last leaf.
If it is not possible to find such an ordering, we answer \textsc{No}. 
Now consider all remaining relevant children. Note that all leaves that are a descendant of such a child must belong to some set $S_{B_j}$, or we can immediately answer \textsc{No}. Since the sets $S_{B_j}$ are disjoint, it is easy to permute them and place them properly within the ordering.

If $w$ is an internal vertex of type Q, we apply the same idea, but there are only two possible orderings to check. If neither helps to satisfy the constraints, we simply answer \textsc{No}.

By applying this procedure in a bottom-up fashion, we can correctly decide whether we can satisfy all constraints.
There are $O(n)$ nodes in the PQ-tree. The ordering of the children can be done in $O(|\mc{Z}| \cdot n^{2})$ time.
\qed \end{proof}

}

\subsection{Recognition Algorithm}

We now show how to use Lemma \ref{lem:pq-tree-mod} to recognize \kgigs.
The algorithm tries to construct a \mir for $G$ with at most $k$ gaps.
It guesses an \ids $X$ for $G$ and a \mir of $G[X]$. Then, it constructs a PQ-tree $T$ for $G\setminus X$ and adds
partition constraints to $T$ that need to be fulfilled by an interval representation of $G\setminus X$ to be merged with
the \mir of $G[X]$. Lemma \ref{lem:pq-tree-mod} can then check whether the guesses led to a \mir of $G$ with $k$ gaps.
We 
refer to Appendix \ref{sec:appOmitted} for the full proof.

\begin{restatable}[\shortversion{$\star$}]{theorem}{Repthmrec} \label{thm:recog}
Given a graph $G$, one can decide whether $I(G) \leq n+k$ in polynomial time if $k$ is a constant.
\end{restatable}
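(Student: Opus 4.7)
The plan is to guess the interval deletion set $X$ and the combinatorial structure of the \mir restricted to $X$, then reduce the question of extending to a full \mir of $G$ with $k$ gaps to a system of partition constraints on a PQ-tree for $G[V\setminus X]$ that is checked by Lemma~\ref{lem:pq-tree-mod}. The justification for targeting such an $X$ is that if $G$ is a \kgig with witness \mir $f$, then $\gap_f(G)$ is an \ids of size at most $k$, since each of its vertices contributes at least one ``extra'' interval to the count.

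Concretely, I would first enumerate all candidate sets $X \subseteq V$ with $|X|\le k$ in $n^{O(k)}$ time. For each $X$ we check whether $G[V\setminus X]$ is an \ig via the Booth--Lueker algorithm, and if so obtain a PQ-tree $T$ whose admissible leaf orderings describe the valid interval representations of $G[V\setminus X]$. Since the total number of intervals associated with vertices of $X$ in any witness \mir is at most $|X|+k = O(k)$, we can enumerate in $O(1)$ time the combinatorial ``pattern'' of the \mir restricted to $X$ --- the number of intervals per vertex, their left-to-right order, and which pairs intersect --- retaining only those patterns that correctly induce $G[X]$. Finally, for each of the $O(k)$ intervals $I^x_j$ of each $x\in X$ we guess the leftmost and rightmost vertex $u^j_L, u^j_R \in V\setminus X$ whose single interval intersects $I^x_j$, contributing $n^{O(k)}$ choices overall.

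Each triple of guesses determines, for every $x\in X$, a partition constraint $B_x$ on $T$: the set $N(x)\setminus X$ must split into $t_x$ consecutive groups $S_1,\ldots,S_{t_x}$ with the prescribed endpoints, one group per interval of $x$. I then invoke Lemma~\ref{lem:pq-tree-mod} to test in polynomial time whether $T$ admits an ordering that simultaneously satisfies all these constraints; if so, combining that ordering with the guessed pattern yields an explicit \mir of $G$ with at most $k$ gaps, and conversely any such \mir arises from one of the enumerated guesses. Multiplying the $n^{O(k)}$ choices of $X$ by the $O(1)$ patterns and the $n^{O(k)}$ boundary guesses, then running the polynomial-time test of the lemma, gives a polynomial-time algorithm whenever $k$ is constant.

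The main obstacle is to meet the disjointness hypothesis of Lemma~\ref{lem:pq-tree-mod}: the sets $S_{B_x}$ must be pairwise disjoint across distinct $x\in X$, yet a priori two vertices of $X$ can share many neighbors in $V\setminus X$. The key structural claim I would need is that any $v\in V\setminus X$ whose single interval meets intervals of two different members of $X$ is forced to lie at an extremum of the corresponding groups, hence must coincide with one of the guessed $u^j_L$ or $u^j_R$; a careful case analysis of how $v$'s interval can overlap the $O(k)$ intervals supported on $X$ should establish this, at the cost of enlarging the boundary enumeration by an additional $n^{O(k^2)}$ factor, which is still polynomial for constant $k$. The remaining work consists of syntactic bookkeeping on $T$ to translate a successful application of Lemma~\ref{lem:pq-tree-mod} into an actual \mir witnessing $I(G)\le n+k$.
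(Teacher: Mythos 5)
Your overall strategy --- guess the set $X$ of vertices with gaps, guess the combinatorial pattern of the $O(k)$ intervals supported on $X$, and reduce the extension problem to partition constraints on a PQ-tree checked by Lemma~\ref{lem:pq-tree-mod} --- is the same as the paper's. However, there is a genuine gap exactly at the point you flag as the ``main obstacle,'' and the structural claim you propose to close it is false. You index the partition constraints by the vertices $x \in X$, so that $S_{B_x}$ is essentially $N(x)\setminus X$, and you claim that any $v \in V\setminus X$ adjacent to two members of $X$ must coincide with one of the guessed extremal elements $u^j_L, u^j_R$. This fails whenever intervals of two distinct vertices $x_1, x_2 \in X$ overlap in a long common region: arbitrarily many vertices of $V \setminus X$ can have their single interval strictly inside that common region, each adjacent to both $x_1$ and $x_2$ and none of them extremal in either group. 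Since there can be $\Omega(n)$ such shared neighbors, they cannot be absorbed into the boundary enumeration, and the disjointness hypothesis of Lemma~\ref{lem:pq-tree-mod} cannot be met by per-vertex constraints.

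The paper resolves this by changing the unit of indexing from vertices of $X$ to \emph{displayable cliques} of $G[X]$: a display of a clique $C \subseteq X$ is a maximal open segment of the line contained in an interval of every vertex of $C$ and disjoint from the intervals of all other vertices of $X$. The displays partition the relevant portion of the line according to which subset of $X$ covers it, so each maximal clique $K$ of $G$ is assigned to the single displayable clique $C = K\cap X$, and the resulting sets $S_C$ are disjoint by construction. A second, more technical mismatch: the Booth--Lueker PQ-tree for $G\setminus X$ has the \emph{maximal cliques} as leaves, not the vertices, so the partition constraints and the guessed boundary objects must be formulated on cliques --- the paper guesses, for each display, the two outer cliques of $\mc{K}_C$ rather than extremal vertices --- while the consecutivity requirements for individual vertices are imposed separately via the reduction algorithm. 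With these two repairs your argument becomes the paper's proof; without them the application of Lemma~\ref{lem:pq-tree-mod} is not justified.
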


\longversion{\shortversion{\Repthmrec*}

\begin{proof}
As a first step, the algorithm guesses how many intervals are associated to each vertex of $G=(V,E)$, with a total of at most $n+k$ intervals.
There are at most $O(n^{k})$ such choices.
Let $X\subseteq V$ denote the set of vertices to which more than one interval is associated. The algorithm verifies that $G \setminus X$ is an interval graph, otherwise it immediately moves to the next guess.

Then the algorithm enumerates all permutations of the set of all endpoints of intervals associated with vertices in $X$.
There are at most $(4k)!$ such permutations, and
each permutation corresponds to a \mir $f$.
Next, verify that $f$ is indeed a \mir for $G[X]$, otherwise move on to the next permutation.

Consider the set $\mc{K}'$ of maximal cliques of $G \setminus X$. Since $G \setminus X$ is an interval graph, $|\mc{K}'| \leq  |V \setminus X|$ and $\mc{K}'$ can be found in polynomial time using a perfect elimination order~\cite{RoseTL76}. Construct a new set $\mc{K}$ of cliques, where $\mc{K}$ is obtained from $\mc{K}'$ by adding all pairwise intersections of cliques in $\mc{K}'$.
Compute the set $\mc{G}$ of all maximal cliques of $G$. By Lemma~\ref{lem:ubcliques}, constructing $\mc{G}$ takes $2^{k}n^{O(1)}$ time using the polynomial-delay enumeration algorithm of~\cite{TsukiyamaIAS77}.

Consider the \mir $f$ that we have guessed before. 
A \emph{display} of a clique $C$ is a maximal open interval $(a,b)$ such that $(a,b)$ is contained in an interval of each vertex of $C$ and is disjoint from the intervals of all other vertices. 
We say that a clique is \emph{displayable} if it has a display. 
Let $C$ be a displayable clique of $G[X]$.
First we find $\mc{K}_{C} = \{ K \cap (\bigcap_{v \in C} N(v)) \mid K \in \mc{K}'\}$.
For each display of each displayable clique $C$, we guess its \emph{outer cliques}, the two cliques of $\mc{K}_{C}$ that appear first and last under the display. Denote the multiset of chosen cliques by $\mc{K}_{C}^{*} = \{ K^{1}_{C},\bar{K}^{1}_{C},\ldots,K^{t}_{C},\bar{K}^{t}_{C} \}$, numbered in the order in which they were chosen, where $K^{i}_{C}, \bar{K}^{i}_{C}$ are the outer cliques of the $i$-th display. Since $\mc{I}$ has at most $4k$ displays, it takes $n^{O(k)}$ time in total to guess $\mc{K}_{C}^{*}$ for all displayable cliques $C$.

Let $\mc{K}^{*}$ denote the multiset that is the union of all the guessed $\mc{K}_{C}^{*}$, and let $\bar{\mc{K}}$ denote the multiset obtained after adding all cliques that are in $\{ K \setminus X \mid K \in \mc{G}\}$ and not already in $\mc{K}^{*}$. 
Initialize a PQ-tree $T$ with $\bar{\mc{K}}$ as ground set. Run the reduction algorithm of Booth and Lueker~\cite{BoothL76} on $T$ with the set $\mathbb{S} = \{ \{ K \in \bar{\mc{K}} \mid v \in K \} \mid v \in V(G) \setminus X \}$.
If it fails, continue to the next guess of outer cliques.

We build a set $S_{C}$ which we will use for a partition constraint on the PQ-tree conform Lemma~\ref{lem:pq-tree-mod}. 
Initially, $S_{C}$ contains the outer cliques of $C$. 
Consider a maximal clique $K \in \mc{G}$.
Suppose that $K \setminus X$ has a vertex $v$ such that $v$ does not appear in any outer clique of $C = K \cap X$. 
Since $v\in K$, $v$ is a neighbor of every vertex in $C$. Hence the interval of $v$ must be a strict subinterval of a display of $C$. 
Therefore we add $K \setminus X$ to $S_{C}$.
If no such vertex $v$ exists all vertices of $K \setminus X$ appear in the union of the outer cliques of $C$. 
As $K$ is maximal, $K \setminus X$ is not a strict subset of any single outer clique.
All vertices from $K \setminus X$ appear in the union of two outer cliques that are consecutive in the order of their endpoints in $f$, for otherwise an outer clique separates two vertices from $K \setminus X$ and hence contradicts that $K \setminus X$ is a clique.
%
Suppose these consecutive outer cliques are $\bar{K}^{i}_{C},K^{i+1}_{C}$ for some $i$. 
We may assume that $K \setminus X$ contains a vertex from $\bar{K}^{i}_{C} \backslash K^{i+1}_{C}$ and a vertex from $K^{i+1}_{C} \backslash \bar{K}^{i}_{C}$, because otherwise $K \setminus X \subseteq \bar{K}^{i}_{C}$ or $K \setminus X \subseteq K^{i+1}_{C}$. 
But then $K \setminus X$ appears between $\bar{K}^{i}_{C}$ and $K^{i+1}_{C}$, and is not added it to $S_{C}$. 
Suppose instead that the two outer cliques are $K^{i}_{C},\bar{K}^{i}_{C}$. Using similar reasoning, we can then show that $K \setminus X$ appears between $K^{i}_{C}$ and $\bar{K}^{i}_{C}$, and thus must be added to $S_{C}$. 
This finishes the construction of $S_{C}$.

For each displayable clique $C$, we have a set of cliques $S_{C}$ that should appear under a display of $C$. Use Lemma~\ref{lem:pq-tree-mod} to partition them into sets $S^{i}_{C}$, where $S^{i}_{C}$ appears between $K^{i}_{C},\bar{K}^{i}_{C}$. Run the reduction algorithm of Booth and Lueker on $T$ with the set $\mathbb{S}$, which is the union of all $\mathbb{S}_C = \{ S^{i}_{C}, S_{C}^{i} \cup \{K^{i}_{C}\}, S_{C}^{i} \cup \{\bar{K}^{i}_{C}\} \mid 1 \leq i \leq t \}$ taken over all displayable cliques $C$.

Assuming all the above operations succeed, the PQ-tree represents admissible permutations of the maximal cliques that yield a
\mir of $G$ with $k$ gaps.
We obtain this \mir by going from left to right through the leafs of the PQ-tree. 
When going from one clique $K$ to the next $K'$, we close all intervals for vertices in $K \setminus K'$ and we open all intervals for vertices in $K'\setminus K$.

This concludes the proof.
\qed \end{proof}
}

\shortversion{
\begin{proof}[Sketch]
As a first step, the algorithm guesses an \ids $X$ of $G$ of size at most $k$
and it guesses the number of intervals that are assigned to each vertex of $X$, such that the
total number of intervals is at most $|X|+k$. In total there are $O(n^{k})$ choices.
For each choice, the algorithm checks that $G\setminus X$ is an interval graph,
because otherwise it can immediately move to the next choice for $X$.
The algorithm also guesses
the order of all endpoints of intervals associated with vertices in $X$.
There are at most $(4k)!$ different permutations.
The ordering defines a \mir $f$ of $G[X]$ and determines the way the vertices of $X$ overlap.
If this ordering does not match with the edges of $G[X]$, disregard the current guess.

As $G \setminus X$ is a interval graph we can find all the maximal cliques in polynomial time using a perfect elimination order~\cite{RoseTL76}. 
We also find all maximal cliques of $G$ using Lemma~\ref{lem:ubcliques} and a polynomial delay enumeration algorithm \cite{TsukiyamaIAS77}.

Suppose $f$ can be extended into a \mir $f'$ for $G$ by assigning exactly one interval to each vertex from $V\setminus X$.
Consider some endpoint $p$ of an interval in $f$. Then, $p$ defines a clique of
$G \setminus X$, contained within the neighborhoods of some vertices from $X$.
For each endpoint of an interval in $f$, the algorithm guesses this clique and the clique that comes just before $p$.
Build a PQ-tree of the maximal cliques of $G$ restricted to $G \setminus X$ plus the cliques corresponding to endpoints of intervals in $f$.
Then, partition all the cliques in the PQ-tree into sets depending on what subset of intervals from $f$ they will belong to.

Finally we use this partition to add partition constraints to the PQ-tree and apply Lemma \ref{lem:pq-tree-mod}.
Once we have the order of the cliques in the PQ-tree a \mir with $k$ gaps can easily be obtained.
\qed \end{proof}
}

\section{\FPT\ Results}

\noindent
The \MIT problem is specific to multiple interval graphs.
This problem and its variants is well studied for $t$-interval graphs
(see for example \cite{Alon98,HassinS08,Kaiser97,Tardos95}).
Given a graph $G$, a \mir $f$ of $G$, and a positive integer $p$,
the problem asks whether there is a set $P$ of $p$ points on the real line such
that
each vertex of $G$ is associated to an interval containing a point from $P$.
By relating this problem to a problem from Constraint Satisfaction, we obtain
the following result.

\begin{theorem}
 The \MIT problem, parameterized by $k$ has a $O(k^2)$-vertex kernel and can be
solved in time $O(1.6181^k k^2 +n)$ on \kgigs, where $n$ is the number of
vertices of the input graph.
\end{theorem}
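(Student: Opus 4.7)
The plan is to reduce the \MIT problem on a \kgig to a constraint satisfaction instance of size $O(k^2)$, and then invoke a known branching algorithm to solve that CSP in time $O(1.6181^k k^2)$.

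From the given \mir I would first identify in $O(n)$ time the set $V_g$ of gap vertices and the intervals they contribute. Because the total number of intervals is at most $n+k$ and each gap vertex contributes at least two, $|V_g| \le k$ and together the gap vertices contribute at most $2k$ intervals. These $O(k)$ gap-intervals subdivide the real line into $O(k)$ maximal strips, each crossed only by intervals of non-gap vertices; within each strip the intervals form an ordinary interval graph, on which classical greedy point-cover sub-tasks are solvable in polynomial time.

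The core observation is that once we decide, for each $v \in V_g$, which of its intervals is supposed to contain a chosen transversal point (or that $v$ is covered only incidentally by a point placed for other reasons), the residual problem decomposes strip-by-strip and can be dispatched greedily. This alone gives a $2^{O(k)}$ algorithm. To obtain the kernel and the improved running time, I would encode the residual decisions as a CSP with one variable per gap vertex---whose domain equals its interval list together with a dummy ``covered incidentally'' value---and one constraint per strip, capturing how many transversal points must be placed there as a function of which gap-vertex intervals have been selected to overlap the strip. The resulting CSP has $O(k)$ variables, $O(k)$ constraints, and total description size $O(k^2)$; realising this CSP back as an \MIT instance on a \kgig with $O(k^2)$ vertices yields the stated kernel. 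Plugging in an existing branching algorithm for the CSP with base $1.6181 \approx \phi$ then delivers the $O(1.6181^k k^2)$ factor, while the additive $n$ absorbs the linear-time preprocessing and the per-strip greedy computations.

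The main obstacle will be pinning down the CSP encoding precisely enough that (i) feasible CSP assignments correspond bijectively to minimum \MIT transversals on the original \kgig, and (ii) the resulting constraint language falls within a class admitting a $1.6181^k$-time branching algorithm---most plausibly a weighted 2-SAT or matching-style formulation whose search tree satisfies the Fibonacci recurrence $T(k) \le T(k-1) + T(k-2)$. Handling gap intervals that straddle multiple strips, and gap vertices covered incidentally by points placed to dispatch non-gap intervals, is where the bookkeeping becomes technical; all other steps reduce to the standard greedy transversal argument for ordinary interval graphs.
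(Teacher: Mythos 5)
Your high-level instinct---reduce \MIT to a constraint satisfaction problem and invoke a known algorithm---matches the paper's strategy, but the execution has a genuine gap: you never identify the target problem, and the two quantitative claims (the $O(k^2)$-vertex kernel and the $O(1.6181^k k^2+n)$ running time) are exactly the parts you leave unproven. The paper reduces to the \textsc{Consistency} problem for AtMost-NValue constraints, parameterized by the total number of \emph{holes} in the variable domains, for which an $O(k'^2)$-variable kernel and an $O(1.6181^{k'}k'^2+n'+|D|)$-time algorithm are already known. That reduction is immediate and needs none of your machinery: take one variable per vertex (not just per gap vertex), let its domain be the set of interval endpoints lying in its intervals, and ask whether some assignment uses at most $p$ distinct values. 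A hole in a domain corresponds precisely to a gap in the \mir, so the parameter is preserved exactly ($k'=k$), and both bounds follow by citation.

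By contrast, your construction---variables only for gap vertices, a dummy ``covered incidentally'' value, one constraint per strip, and greedy per-strip subroutines---leaves the crux unresolved, as you yourself acknowledge: you do not exhibit a constraint language for which a $1.6181^k$ branching algorithm is known, nor do you justify that re-encoding your CSP as an \MIT instance yields a \kgig on $O(k^2)$ vertices with the \emph{same} parameter $k$. The golden-ratio base cannot be conjured from a generic Fibonacci recurrence; in the paper it is inherited verbatim from the AtMost-NValue result, as is the kernel. The $2^{O(k)}$ fallback you sketch (guess a designated interval per gap vertex, then solve each residual instance greedily) is plausible but establishes a strictly weaker statement than the theorem. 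To repair the proof you would need either to name a concrete problem with the required kernel and branching bound and give a parameter-preserving reduction to it, or to design and analyze such a branching algorithm yourself---neither of which the proposal does.
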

\begin{proof}
 \newcommand{\dom}{\mathit{dom}}
 The \textsc{Consistency} problem for AtMost-NValue contraints has as input a
set of variables $X=\set{x_1, \ldots, x_{n'}}$,
 a totally ordered set of values $D$, a map $\dom: X \rightarrow 2^D$ assigning
a non-empty domain $\dom(x)\subseteq D$ to each variable
 $x\in X$, and an integer $N$. The question is whether there exists an
assignment of the variables from $X$ to values from their domain
 such that the number of distinct values taken by variables from $X$ is at most
$N$.

 Bessi{\`e}re \etal \cite{BessiereHHKQW08} were the first to parameterize this
problem by the total number $k'$ of holes in the domains of the variables. Here,
 a hole in the domain of a variable $x$ is a couple $(u,w) \in \dom(x) \times
\dom(x)$, such that there is a value $v \in D \setminus \dom(x)$ with $u<v<w$
and there
 is no value $v'\in \dom(x)$ with $u<v'<w$. The problem has a kernel with
$O(k'^2)$ variables and domain values and can be solved in time $O(1.6181^{k'}
k'^2 +n'+|D|)$ \cite{GaspersS11}.

 The theorem will follow by a simple reduction of a \MIT instance
$(G=(V,E),f,p)$ with parameter $k$ to an instance $(X,D,\dom,N)$ with parameter
$k'=k$ of the \textsc{Consistency} problem for AtMost-NValue contraints.
 Let $F:= \set{l,r : [l,r]\in f(v), v\in V}$ denote the set of all left and
right endpoints of intervals in $f$.
 The reduction sets $X:=V$, $D:=F$, $\dom(x):= \bigcup_{I\in f(x)} I \cap F$,
and $N:=p$. It is easy to see that both instances are equivalent and that
$k'=k$.
\qed \end{proof}

\noindent
A vertex subset $U$ is a \emph{\fvs{}} in a graph $G$ if $G\setminus U$ has no
cycle.
The \FVS problem has as input a graph $G$ and a positive integer $p$, and the
question is whether $G$ has a \fvs of size at most $p$. 

\begin{theorem}
 \FVS can be solved in time  $2^{O(k\log k)}\cdot n^{O(1)}$
 on \ipkvgs with $n$ vertices.
\end{theorem}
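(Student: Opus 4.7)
The plan is a two-level enumeration over the modulator $X$, followed by a polynomial-time dynamic program on the interval part $G\setminus X$. Write a sought feedback vertex set as $F=F_X\cup F_Y$ with $F_X=F\cap X$ and $F_Y=F\cap (V\setminus X)$, so $|F_X|\le k$.

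First, I would guess $F_X\subseteq X$ by enumerating all $2^k$ subsets of $X$. For each guess, set $X'':=X\setminus F_X$ and discard the guess unless $G[X'']$ is already a forest, which is a necessary condition since $F_Y$ cannot touch any vertex of $X''$ nor any edge of $G[X'']$. The residual task is to find a minimum $F_Y\subseteq V\setminus X$ such that $H:=G\setminus (F_X\cup F_Y)$ is a forest.

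Second, I would enumerate a labeled auxiliary forest $A$ that encodes how the vertices of $X''$ sit inside $H$. Since $H$ is a forest, at most $|X''|\le k$ of its trees contain a vertex of $X''$; contracting each such tree to a single cluster node $t_i$ yields a simple forest $A$ on at most $2k$ vertices whose edge set extends $G[X'']$ by exactly one edge $vt_i$ for every $v\in X''$ that has a neighbor in the $V\setminus X$-part of the tree represented by $t_i$. The number of such labeled forests extending $G[X'']$ is $(2k)^{O(k)}=2^{O(k\log k)}$, and I enumerate all of them.

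Third, for each pair $(F_X,A)$ I would solve a constrained interval-FVS problem on $G\setminus X$: find a minimum $F_Y\subseteq V\setminus X$ such that $(G\setminus X)\setminus F_Y$ is a forest whose components can be labeled by $t_1,\dots,t_r$ so that, for every $v\in X''$ and every $t_i$, vertex $v$ has exactly one neighbor in the component labeled $t_i$ when $vt_i\in E(A)$, and none otherwise. I plan to solve this by a dynamic program sweeping the clique path of $G\setminus X$, carrying at each clique a state that records which of the at most $r\le k$ cluster labels are currently open and how many of the prescribed $X''$-to-$t_i$ adjacencies have already been realized, together with the usual FVS bookkeeping; the bounded cluster-label universe gives $2^{O(k\log k)}$ such states per clique.

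The main obstacle will be designing this constrained interval-FVS DP so that the forest condition on $G\setminus X$ and the "exactly one neighbor per $X''$-to-cluster edge" condition of $A$ are both enforced while the exponential factor stays at $2^{O(k\log k)}$. The intended approach is to adapt the classical polynomial-time FVS algorithm on interval (chordal) graphs, threading through its DP additional bookkeeping of the at most $2k$ cluster-label objects via the standard open/close mechanism on the clique path; since every open cluster is tied to one of the $\le k$ $X''$-vertices and per-clique decisions depend only on this bounded-size state, the DP has the promised complexity. Multiplying the $2^k$ enumeration of $F_X$, the $2^{O(k\log k)}$ enumeration of $A$, and the polynomial per-DP cost yields the claimed total running time of $2^{O(k\log k)}\cdot n^{O(1)}$.
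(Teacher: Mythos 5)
Your overall architecture (enumerate the trace of the solution on $X$, then run a DP on the interval part) is reasonable, but the structural claim underlying your second step is false, and this is a genuine gap. After deleting $F_X\cup F_Y$, a vertex $v\in X''$ may be adjacent to arbitrarily many distinct connected components of $(G\setminus X)\setminus F_Y$: the forest condition only forbids $v$ from having two neighbors in the \emph{same} component. For instance, if $v$ is adjacent to $n-1$ pairwise disjoint intervals, the graph is already a tree (so $F=\emptyset$), yet the tree containing $v$ decomposes into $n-1$ components of the interval part once $v$ is removed. Hence the object you want to enumerate --- how the trees of the final forest are assembled from $X''$ and the components of $(G\setminus X)\setminus F_Y$ --- lives on $\Theta(n)$ nodes, not on at most $2k$, and your constraint that ``$v$ has exactly one neighbor in the component labeled $t_i$ if $vt_i\in E(A)$ and none otherwise'' both over-constrains (it forbids legitimate pendant components attached by a single edge) and cannot be expressed by a forest $A$ with one cluster node per tree of $H$, since the $V\setminus X$-part of such a tree is in general disconnected. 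The $2^{O(k\log k)}$ enumeration bound and the DP state space built on top of it therefore do not follow. A repair is possible --- only components with at least two edges into $X''$ matter for acyclicity, and a simple forest-counting argument bounds their number by $k-1$ --- but that is a different and more delicate argument than the one you give, and the DP would still have to police the ``at most one neighbor per component'' condition for the unbounded number of remaining components.

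The paper avoids global guessing altogether: it takes the clique path decomposition of $G\setminus X$, adds all of $X$ to every bag, and runs a single dynamic program whose state records $F\cap X$, the at most two undeleted vertices of the current interval clique (any \fvs misses at most two vertices of any clique), and a partition of the at most $k+2$ surviving bag vertices into the connected components of the partial solution, together with the solution size. The Bell number $B_{k+2}=2^{O(k\log k)}$ bounds the number of partitions, which is exactly where the running time comes from; acyclicity is enforced locally by maintaining the partition rather than by matching a guessed global template.
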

\begin{proof}
We design a dynamic-programming algorithm to solve \FVS on \ipkvgs.
The key observation is that any \fvs misses at most two vertices of any clique
of $G$.

Any interval graph (see e.g.~\cite{Golumbic80}) has a path decomposition whose
set of bags is exactly the set of maximal cliques.
Kloks~\cite{Kloks94} showed that every path decomposition of a graph $G$ can be
converted in linear time to a \emph{nice path decomposition}, such that the size
of the largest bag does not increase, and the total size of the path is linear
in the size of the original path. A path decomposition $(B,P)$ is \emph{nice} if
$P$ is a path with nodes $1,\ldots,r$
such that the nodes of $P$ are of two types:
\begin{enumerate}
 \item an \emph{introduce node} $i$ with $B_i=B_{i-1}\cup\{v\}$ for some vertex
$v\in V$ (we assume that $X_0=\emptyset$) ;
 \item a \emph{forget node} $i$ with $B_i=B_{i-1}\setminus\{v\}$ for some vertex
$v\in V$.
\end{enumerate}  
Thus, an interval graph $G$ has a nice path decomposition with the
additional property that each bag is a 
clique in $G$.

Now we are ready to describe our algorithm for \FVS. 
Let $G$ be an \ipkvg with \ids $X$.
Using an interval representation of $G' = G\setminus X$, we construct a path decomposition of $G'$ whose set of bags is the
set of maximal cliques of $G'$, and then we construct in linear time a nice path
decomposition $(B',P')$ of $G'$ where $P'$ is a path on nodes $1,\ldots,r$. Set
$B'_0:=\emptyset$. We construct a path decomposition of $G$ with bags
$B_0,\ldots,B_r$ where $B_i=B'_i\cup X$ for $i\in\{0,\ldots,r\}$.
Now we apply a dynamic programming algorithm over this path decomposition.

We first describe what is stored in the tables corresponding to the nodes $0,\ldots,r$ of
the path. 
For any $i\in\{0,\ldots,r\}$, we denote by $G_i$ the subgraph of $G$ induced by 
$\cup_{j=0}^i B_j$. For $i\in\{0,\ldots,r\}$, the table stores the records
$R=(F,F_i,{\cal P},s)$, where 
\begin{itemize}
\item $F\subseteq X$;
\item $F_i\subseteq B'_i$;
\item $\cal P$ is a partition of $B_i\setminus (F\cup F_i)$; and
\item $s\leq n$ is a positive integer;
\end{itemize}
with the property that there is a \fvs $U_i$ of $G_i$ such that
\begin{itemize}
\item $|U_i|\leq s$;
\item $U_i\cap X=F$ and $U_i\cap B'_i=F_i$;
\item for any set $S$ in $\cal P$, $x,y\in S$ if and only if $x,y$ are in the
same component of $G_i\setminus U_i$. 
\end{itemize}
Clearly, $G$ has a \fvs of size at most $p$ if and only if the
table for $r$ contains a record $R$ with $s=p$.
The tables are created and maintained in a straightforward way. 

It remains to estimate the running time. Since $|X|\leq k$, there are at most
$2^k$ subsets $F$ of $X$. Each $B'_i$ is a clique. Hence, $|B'_i\setminus F_i|\leq
2$,
since otherwise $G_i\setminus U_i$ has a cycle. It follows that we consider at
most $\frac{1}{2}n(n+1)+1$ sets $F_i$. Each set $B_i\setminus (F\cup F_i)$ has
size at most $k+2$, and the number of partitions is upper bounded by $B_{k+2}$,
where $B_t$ is the $t^{\text{th}}$ Bell number.
Finally, $s$ can have at most $n$ values. We conclude that for each  
 $i\in\{0,\ldots,r\}$, the table for $i$ contains at most $O(2^kB_{k+2}\cdot
n^3)$ records. It follows that our algorithm runs in time
$2^{O(k\log k)}\cdot n^{O(1)}$.
\qed \end{proof}

\noindent
A \emph{\cc{}} of size $t$ of a graph $G=(V,E)$ is a partition of $V$
into $Z_1,Z_2,\ldots,Z_t$ where $Z_i$ is a clique in $G$, for $1 \leq i \leq t$. 
The \CC problem has as input a graph $G$ and a positive integer $p$,
and the question is whether $G$ has a \cc of size $p$.

\begin{theorem}
 \CC can be solved in time  $O(2^{k} \cdot n^{O(1)})$ and polynomial space
 on \ipkvgs with $n$ vertices.
\end{theorem}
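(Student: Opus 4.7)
The plan is to branch over the $2^k$ subsets $Y$ of $X$, where $Y$ is interpreted as the set of $X$-vertices placed in cliques of the optimum cover that contain at least one vertex of $I := V \setminus X$ (``mixed'' cliques); the other $X$-vertices are covered by cliques entirely within $X$ (``pure-$X$''). Let $\mu(Y)$ denote the minimum number of cliques in a partition of $I \cup Y$ in which every clique meets $I$, and let $\kappa(Z)$ denote the minimum clique cover of $G[Z]$. Every clique cover of $G$ decomposes into a pure-$X$ cover of $X \setminus Y$ and a mixed cover of $I \cup Y$, and the two parts are independent, so the minimum clique cover of $G$ equals $\min_{Y \subseteq X}\bigl[\mu(Y) + \kappa(X \setminus Y)\bigr]$.

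The algorithm enumerates each $Y \subseteq X$ in $O(2^k)$ time and polynomial space, and computes $\mu(Y) + \kappa(X \setminus Y)$ in polynomial time per $Y$. Since $|X \setminus Y| \le k$, we evaluate $\kappa(X \setminus Y)$ via a polynomial-space exact chromatic-number algorithm on the complement of $G[X \setminus Y]$. The technical heart is the polynomial-time computation of $\mu(Y)$, which we frame as a constrained interval-piercing problem on $G[I]$: the classical minimum clique cover of $G[I]$ is equivalent to a minimum set of ``clique points'' on the real line that stab every interval, and $\mu(Y)$ additionally requires that each $y \in Y$ be assignable to a chosen point $p$ such that all intervals at $p$ lie in $N(y)$ and the $y$-vertices at the same point form a clique in $G[Y]$. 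This can be solved by a left-to-right sweep along the clique path $K_1, \ldots, K_r$ of $G[I]$, exploiting that each $y$'s \emph{compatible region} (the complement of $\bigcup\{I_j : y \not\sim I_j\}$ on the real line) is itself a union of intervals, so the $Y$-constraints interleave naturally with the piercing.

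The main obstacle is designing the polynomial-time algorithm for $\mu(Y)$ while respecting the clique constraint on $Y$-vertices sharing a point, without enumerating the $B_{|Y|}$ partitions of $Y$ into cliques of $G[Y]$. The key structural idea is that along the clique path the $Y$-assignment decisions are local to each transition: at each step one only needs to decide which of the currently compatible $y$-vertices are placed at the freshly opened clique point, so a DP with polynomial state captures the optimum in polynomial space. Combined with the outer $2^k$ enumeration this yields the claimed $O(2^k \cdot n^{O(1)})$-time, polynomial-space algorithm.
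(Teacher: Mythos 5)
Your top-level decomposition $mcc(G)=\min_{Y\subseteq X}\bigl[\mu(Y)+\kappa(X\setminus Y)\bigr]$ is sound, but the technical heart of the proposal --- the polynomial-time computation of $\mu(Y)$ via piercing points restricted to each $y$'s ``compatible region'' --- is incorrect, and this breaks the algorithm. The compatible region $\mathbb{R}\setminus\bigcup\{I_j : y\not\sim I_j\}$ is the wrong notion: for $y$ to join a clique $C$ it only needs to be adjacent to the intervals \emph{placed in} $C$, not to every interval that happens to contain the chosen stabbing point; a non-neighbor of $y$ may pass over that point while being covered by a different clique. Concretely, take $X=\{y\}$ and $I=\{A,D_1,D_2,E_1,E_2\}$ with $A=[0,4]$, $D_1=[-1,2]$, $D_2=[2,5]$, $E_1=[-1,-0.5]$, $E_2=[4.5,5]$, where $N(y)\cap I=\{A\}$. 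Then $\{y,A\},\{D_1,E_1\},\{D_2,E_2\}$ is a \cc of size $3$, which is optimal since $\{y,E_1,E_2\}$ is independent. But $y$'s compatible region is $(-\infty,-1)\cup(5,\infty)$, which is disjoint from $A$, so no stabbing point of any neighbor of $y$ lies in it: your constrained-piercing routine declares the branch $Y=\{y\}$ infeasible, and the branch $Y=\emptyset$ yields $\kappa(\{y\})+\mu(\emptyset)=1+3=4$. The algorithm would thus output $4$ instead of $3$. Any fix must let $y$ be adjacent only to the intervals assigned to its clique, which reintroduces the joint assignment/partition problem your ``polynomial state'' DP was supposed to avoid; that DP is asserted, not constructed, and is exactly where the difficulty lives. (A secondary, minor point: evaluating $\kappa(X\setminus Y)$ by an exact chromatic-number computation inside each of the $2^k$ branches costs $2^{\Theta(k)}$ per branch, so even granting everything else you would get $4^k n^{O(1)}$ rather than the claimed $O(2^k\cdot n^{O(1)})$.)

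For comparison, the paper avoids this trap by never reasoning about ``all intervals at a point.'' It orders the maximal cliques of $G\setminus X$ as a clique path, takes the vertex $v$ lying only in the first bag, guesses the set $X_1\subseteq X$ of \ids vertices that join $v$'s clique ($2^k$ choices), and then greedily takes the \emph{maximal} clique $X_1\cup\{v\}\cup\{u\in B_1 : X_1\subseteq N(u)\}$ --- membership is tested against $X_1$ only, not against everything overlapping a point --- removes it, and recurses; an exchange argument shows some optimum is built this way. If you want to salvage your route, you would need a correct polynomial-time (or at least $n^{O(1)}$ per branch) algorithm for $\mu(Y)$ that handles the interval-to-clique assignment and the clique constraint on co-located $Y$-vertices simultaneously; as written, that step does not exist.
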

\begin{proof}
Before starting we observe that there is a minimum \cc where $Z_1$ is a maximal 
clique of $G$ and in general $Z_i$ is a maximal clique of $G[Z_i \cup Z_{i+1} \cup \dots Z_t]$. 
I.e. stealing a vertex from a higher numbered clique will not increase the number of cliques in the cover. 

Let $G$ be a \ipkvg with \ids $X$. Using an interval representation of $G' = G\setminus X$,
we construct a path decomposition of $G'$ whose set of bags 
$B_1,\ldots,B_r$ is the set of maximal cliques of $G'$ (see e.g.~\cite{Golumbic80}). 
As each bag of the path decomposition corresponds to the vertex set of a maximal clique 
in $G'$, there is a vertex $v \in B_1 \setminus (B_2 \cup B_3 \cup \ldots B_r)$. 

The algorithm considers all choices for the intersection of $X$ with the clique from the \cc containing $v$.
Each such choice is a clique $X_1$ such that $N(v) \subseteq X_1 \subseteq X$.
Given $X_1$ and $v$, the clique $c(X_1,v)$ of the \cc containing $X_1 \cup \{v\}$ can be chosen greedily
by the maximality argument mentioned above. Indeed, there is a unique maximal clique
containing $X_1 \cup \{v\}$: we set $c(X_1,v) := X_1 \cup \{v\} \cup Y_1$,
where $u \in Y_1$ if and only if $u\in B_1$ and $X_1 \subseteq N(u)$.
Let $mcc(G)$ be the size of a minimum \cc for $G$. 
Then $mcc(G) = 1 + \min \{mcc(G[V \setminus c(X_1,v)]) : X_1 \text{ is a clique and } N(v) \subseteq X_1 \subseteq X\}$.
As the $X_1$ minimizing the above equation is one of the $2^k$ subsets of $X$
we can conclude that clique cover is computed correctly in time $O(2^{k} \cdot n^{O(1)})$ and polynomial space.
\qed \end{proof}

\noindent
The \emph{boolean-width} of graphs is a recently introduced graph parameter \cite{BuiXuanTV2011}.
It will enable us to obtain \FPT\ results for several problems.
As interval graphs have boolean-width at most $\log n$ \cite{BelmonteV11} and adding a vertex to
a graph increases its boolean-width by at most 1, we have the following lemma.

\begin{lemma}
 Any \ipkvg $G$ has boolean width at most $\log n + k$, where $n$ is the number of vertices of $G$.
\end{lemma}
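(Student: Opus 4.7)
The plan is essentially to combine the two facts stated just before the lemma: interval graphs have boolean-width at most $\log n$, and adding a single vertex raises the boolean-width by at most one. Since an \ipkvg has an \ids $X$ of size at most $k$, we peel off $X$ to obtain an interval graph, bound its boolean-width, and then add the $k$ vertices back one by one.

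More concretely, let $G=(V,E)$ be an \ipkvg on $n$ vertices and let $X\subseteq V$ be its \ids, so $|X|\le k$ and $H:=G\setminus X$ is an interval graph on $n-|X|\le n$ vertices. By the result of Belmonte and Vatshelle \cite{BelmonteV11}, the boolean-width of $H$ is at most $\log(n-|X|)\le \log n$. I would then enumerate $X=\{x_1,\ldots,x_{|X|}\}$ in any order and define $G_0:=H$ and $G_i:=G[V(H)\cup\{x_1,\ldots,x_i\}]$ for $1\le i\le |X|$, so that $G_{|X|}=G$. Since each $G_i$ is obtained from $G_{i-1}$ by adding a single vertex, applying the ``adding a vertex increases boolean-width by at most $1$'' fact $|X|$ times gives boolean-width of $G$ at most $\log n + |X|\le \log n + k$.

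The only point that requires a brief justification is the monotonicity statement ``adding a vertex to a graph increases its boolean-width by at most $1$.'' This is a standard observation for boolean-width: given a decomposition tree for $G_{i-1}$, one can attach the new vertex $x_i$ as a new leaf sibling of any existing leaf, which at most doubles the number of equivalence classes of unions-of-neighborhoods across any cut, adding at most $1$ to the logarithm. I would cite \cite{BuiXuanTV2011} (already referenced in the excerpt) for this fact, rather than reprove it here.

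There is no real obstacle: the lemma is a one-line corollary of the two cited facts, and the proof is simply the inductive application sketched above. The only subtlety worth mentioning explicitly is that we use $\log(n-|X|)\le \log n$ to pay for $|X|$ possibly being strictly less than $k$, so that the bound $\log n + k$ remains valid uniformly.
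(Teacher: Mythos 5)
Your proof is correct and follows exactly the route the paper intends: the lemma is stated as an immediate consequence of the two facts given just before it (interval graphs have boolean-width at most $\log n$, and adding a vertex increases boolean-width by at most $1$), applied inductively to the at most $k$ vertices of the \ids. Your additional remarks on justifying the single-vertex monotonicity fact and on $\log(n-|X|)\le\log n$ are fine but not needed beyond what the paper already asserts.
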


\noindent
As several problems can be solved in time $2^{O(b)} n^{O(1)}$ on graphs with
boolean-width $b$ and $n$ vertices \cite{BuiXuanTV2011},
they are \FPT\ on \ipkvgs.

\begin{corollary}
 \IS, \DS, their
 weighted and counting versions, and \textsc{Independent Dominating Set},
 are \FPT\ on \ipkvgs.
\end{corollary}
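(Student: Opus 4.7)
The plan is to observe that this corollary follows directly by combining the preceding Lemma with the algorithmic results on boolean-width. By the Lemma, any \ipkvg $G$ on $n$ vertices has boolean-width at most $\log n + k$. Bui-Xuan, Telle, and Vatshelle \cite{BuiXuanTV2011} proved that \IS, \DS, their weighted and counting versions, and \IDS can all be solved in time $2^{O(b)}\cdot n^{O(1)}$ given a boolean decomposition of the input graph of width $b$. Plugging $b = \log n + k$ into this bound yields a running time of $2^{O(\log n + k)} \cdot n^{O(1)} = 2^{O(k)} \cdot n^{O(1)}$, which is \FPT\ in $k$.

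The only step that requires any care is producing a boolean decomposition of width at most $\log n + k$ in polynomial time, since the algorithms in \cite{BuiXuanTV2011} take such a decomposition as part of the input. First I would take the interval representation of $G\setminus X$, which is available by assumption (an \ids $X$ of size at most $k$ is given), and apply the construction of Belmonte and Vatshelle \cite{BelmonteV11} to obtain a boolean decomposition of $G\setminus X$ of width at most $\log(n-k)$ in polynomial time. Then I would insert the vertices of $X$ back into the decomposition one at a time; a single vertex insertion raises the boolean-width by at most one, so after $k$ insertions the width is at most $\log(n-k) + k \le \log n + k$.

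The main (minor) obstacle is checking that both the decomposition construction and the vertex-insertion step are explicit and polynomial-time, rather than existential. This is already what is used implicitly in the statement of the preceding Lemma, so no new argument is needed. With the decomposition in hand, direct application of the algorithm of \cite{BuiXuanTV2011} to each of the listed problems completes the proof.
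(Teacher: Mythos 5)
Your proposal is correct and follows essentially the same route as the paper, which likewise derives the corollary by combining the boolean-width bound of $\log n + k$ with the $2^{O(b)} n^{O(1)}$ algorithms of Bui-Xuan, Telle, and Vatshelle. Your additional care about explicitly constructing the decomposition (via Belmonte--Vatshelle on $G\setminus X$ and then reinserting the $k$ deleted vertices) is a detail the paper leaves implicit, but it does not change the argument.
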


\noindent
We also provide simple polynomial-space algorithms for \IS and \CLIQUE on \ipkvgs and for \DS on \kgigs{}\shortversion{ in Appendix \ref{sec:appISDS}}.

\longversion{
An \emph{\is{}} in a graph $G$ is a set of vertices that are all pairwise non-adjacent in $G$.
The \IS (\CLIQUE) problem has as input a graph $G$ and a positive integer $p$, and the question is whether $G$ has an \is (\clique) of size $p$.

\begin{theorem}\label{thm:is}
 \IS and \CLIQUE can be solved in time $2^k \cdot n^{O(1)}$ and polynomial space on \ipkvgs, where $n$ is the number of vertices.
\end{theorem}
\begin{proof}
 Let $G=(V,E)$ be the input graph with \ids $X$.

 To check whether $G$ has an \is of size $p$, the algorithm goes over all subsets $Y \subseteq X$,
 and checks whether $Y$ is independent in $G$. If so, it checks whether $G \setminus (X \cup N[Y])$
 has an \is of size $p-|Y|$. The last check can be done in linear time \cite{Gavril74}, as $G \setminus (X \cup N[Y])$
 is an interval graph.

 To check whether $G$ has a \clique of size $p$, the algorithm uses a polynomial-delay polynomial-space algorithm enumerating
 all maximal cliques of $G$ \cite{TsukiyamaIAS77}, and checks whether at least one such maximal clique has size at least $p$.
 As $G$ has $O(2^k n)$ maximal cliques by Lemma \ref{lem:ubcliques}, the running time follows.
\qed \end{proof}

 \noindent
A vertex subset $D$ is a \emph{\ds{}} in a graph $G=(V,E)$ if every vertex from $V \setminus D$ has a neighbor in $D$.
The \DS problem has as input a graph $G$ and a positive integer $p$, and the question is whether $G$ has a \ds of size $p$.

\begin{theorem}
 \DS can be solved in time $O(3^k \cdot n^{O(1)})$ and polynomial space on \kgigs.
\end{theorem}
\begin{proof}
 Let $G=(V,E)$ be the input graph, let $f$ be a \mir of $G$ with $k$ gaps, and let $X=\gap_f(G)$.

 The algorithm goes over all partitions $(Y,Z)$ of $X$. For each such partition it will consider dominating sets
 containing $Y$.
 Each vertex $z\in Z$ needs to be dominated by a vertex that has is associated to an interval intersecting
 at least one interval of $z$.
 For each vertex $z\in Z$, the algorithm considers all possibilities of choosing exactly one interval
 $[l_z,r_z]$ from $f(z)$ by which it is to be dominated. It remains to check whether $G$ has a \ds $D$ of size $p$
 such that $Y\subseteq D$ and such that each $[l_z,r_z]$ intersects at least one interval of a vertex from
 $D$, for each $z\in Z\setminus D$. This is done by a polynomial-time algorithm which solves
 a version of \DS on interval graphs where some
 vertices do not need to be dominated \cite{RamalingamR88}. Namely, we start from the interval model
 $\{f(v) : v\in V\setminus X\} \cup \{[l_z,r_z] : z\in Z\}$, we mark the intervals associated to vertices from $N(Y)$
 and check whether this interval graph has $p-|Y|$ vertices dominating every vertex in $V\setminus N[Y]$
 by the algorithm from \cite{RamalingamR88}.
 
 As choosing an interval for each vertex from $Y$ can be reduced to deciding, for each gap, whether the interval is to
 the left or to the right of this gap, the total running time is within a polynomial factor of
 $\sum_{Y\subseteq X} \prod_{z\in X\setminus Y} 2^{|f(z)|-1} = O(3^k)$.
\qed \end{proof}
}

\section{\W[1]-Hardness Result}



A \emph{coloring} of a graph $G=(V,E)$ is a mapping $c\colon V\rightarrow\{1,2,\ldots\}$ such that $c(u)\neq c(v)$ whenever
$uv\in E$. A \emph{$p$-coloring} of $G$ is a
coloring $c$ of $G$ with $c(v)\in\{1,\ldots,p\}$ for $v\in V$.  
The $p$-{\sc Coloring} problem asks for a graph $G$ and a positive integer $p$, whether $G$
has a $p$-coloring. 
The problem $p$-{\sc Precoloring Extension} is to decide whether a given
mapping $c\colon U\rightarrow\{1,\ldots,p\}$ defined on a
(possibly empty) subset $U\subseteq V$ of \emph{precolored} vertices can
be extended to a $p$-coloring of $G$. We refer to these problems as {\sc Coloring} and {\sc Precoloring Extention} if $p$ is assumed to be a part of the input.

First, we make the following observation that was independently made in~\cite{JansenK11}.

\begin{proposition}\label{prop:colFPT}
The parameterization of {\sc Coloring} by $p+k$ is \FPT\ on \ipkvgs.
\end{proposition}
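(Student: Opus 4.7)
The plan is to reduce a $p$-coloring problem on $G$ to $p^{|X|}$ list-coloring subproblems on the interval graph $G\setminus X$, and to show that each list-coloring subproblem is solvable in $f(p)\cdot n^{O(1)}$ time by dynamic programming on a path decomposition whose bags are maximal cliques.

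First, I would enumerate all functions $c_X\colon X\to\{1,\ldots,p\}$; there are at most $p^k$ of them since $|X|\leq k$. Discard any $c_X$ that is not a proper coloring of $G[X]$. Otherwise, for each $v\in V\setminus X$ define a list of admissible colors $L(v) = \{1,\ldots,p\}\setminus\{c_X(u) : u\in N(v)\cap X\}$. A given precoloring $c_X$ extends to a proper $p$-coloring of $G$ if and only if the interval graph $G':=G\setminus X$ admits a proper coloring assigning each $v$ a color from $L(v)$. It therefore suffices to solve this list-coloring instance in $f(p)\cdot n^{O(1)}$ time.

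To solve list-coloring on $G'$, I would use that $G'$, being an interval graph, has a path decomposition whose bags are exactly its maximal cliques, convertible in linear time to a nice path decomposition $(B',P')$ preserving maximum bag size (as used in the \FVS proof above). If any bag $B'_i$ has more than $p$ vertices, we reject this guess of $c_X$ immediately, since $G'[B'_i]$ is a clique of size larger than $p$ and admits no proper $p$-coloring. Otherwise every bag has at most $p$ vertices. The dynamic programming table at node $i$ stores the set of proper colorings $g\colon B'_i\to\{1,\ldots,p\}$ with $g(v)\in L(v)$ for each $v\in B'_i$ that extend to a proper list-coloring of the subgraph induced by the vertices appearing in $B'_0,\ldots,B'_i$. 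The number of states per node is at most $p^p$. At an introduce node with new vertex $v$, we extend each stored coloring by every color in $L(v)$ consistent with the colors of the other vertices in the bag; at a forget node, we simply project out the forgotten vertex. Correctness follows from the standard path-decomposition DP argument, and the running time per subproblem is $p^{O(p)}\cdot n^{O(1)}$.

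Multiplying by the $p^k$ choices of $c_X$ gives total running time $p^{k}\cdot p^{O(p)}\cdot n^{O(1)}$, which is \FPT\ in $p+k$. The only nontrivial ingredient is ensuring that the bags of the path decomposition are cliques so that the state space of the DP is bounded by $p^{O(p)}$; this is automatic for interval graphs. Everything else — enumerating colorings of $X$, computing lists, and the standard list-coloring DP on a nice path decomposition — is routine, so no substantial obstacle arises.
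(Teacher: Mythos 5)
Your proof is correct, but it takes a different route from the paper. The paper uses a Win--Win argument: it first tests in $2^k\cdot n^{O(1)}$ time (via the clique enumeration of Theorem~\ref{thm:is}) whether $G$ contains a clique of size $p+1$, in which case the answer is \textsc{No}; otherwise the interval graph $G\setminus X$ has pathwidth at most $p$, hence $\pw(G)\leq p+k$, and $p$-\textsc{Coloring} is \FPT{} in pathwidth plus $p$ by Courcelle's Theorem. You instead enumerate the $p^{k}$ colorings of $X$, reduce each to a list-coloring instance on the interval graph $G\setminus X$, and solve that by an explicit dynamic program over a clique path decomposition, rejecting whenever a bag (necessarily a clique) exceeds $p$ vertices. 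Both arguments are sound; yours is self-contained and yields a concrete $p^{k+O(p)}\cdot n^{O(1)}$ bound without invoking Courcelle's Theorem or the maximal-clique enumeration bound, whereas the paper's is shorter and delegates the combinatorial work to black boxes. One small remark: your bag-size test does not depend on the guessed coloring $c_X$, so a bag of size exceeding $p$ lets you reject the entire instance up front rather than just the current guess; this does not affect correctness.
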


\begin{proof}
We use a Win-Win approach.
Let $G$ be an \ipkvg with \ids $X$.
If $G$ has a clique of size $p+1$, then it cannot be colored by $p$ colors.
By Theorem~\ref{thm:is} it can be determined whether such a clique exists in time $2^k \cdot n^{O(1)}$.
Otherwise, the interval graph $G\setminus X$ has pathwidth at most $p$~\cite{Bodlaender98}.
Thus, $\pw(G)\leq p+k$.
It remains to observe that $p$-{\sc Coloring} is \FPT\ on graphs of bounded pathwidth by Courcelle's Theorem~\cite{Courcelle92}.
\qed \end{proof}

\noindent
However, the parameterization by $k$ of this problem is \W[1]-hard, even for \kgigs.

\begin{theorem}\label{thm:colW}
 \COL, parameterized by $k$, is \W[1]-hard on \kgigs.
\end{theorem}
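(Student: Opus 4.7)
The plan is to compose a known W[1]-hardness reduction from \textsc{Multicolored Clique} to \textsc{Precoloring Extension} on interval graphs (due to Marx) with a second reduction that absorbs the precoloring into a $k$-gap interval graph whose gap count is linear in the number of precolored vertices produced. Marx's reduction produces an interval-graph instance whose number $\ell$ of precolored vertices and number of colors $p$ are both polynomial in the source parameter, so it suffices to give an $O(\ell)$-gap reduction from \textsc{Precoloring Extension} on interval graphs to \COL on \kgigs.

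Given such an instance $(H,c,p)$ with $H$ an interval graph, $c\colon U\to\{1,\dots,p\}$, and $|U|=\ell$, I would first preprocess by eliminating duplicate precolors: for each pair $u,u'\in U$ with $c(u)=c(u')$, reject if $u\sim u'$ in $H$, and otherwise identify $u$ with $u'$. By Observation~\ref{obs:ident} the resulting graph $H'$ is an $(\ell-\ell')$-gap interval graph, where $\ell'\le\ell$ is the number of distinct precolored colors, with an injective precoloring on the surviving set $U'$. I would then construct $G$ by attaching a fresh $p$-clique $K=\{z_1,\dots,z_p\}$ to $H'$ and, for each $u\in U'$ with $c(u)=c_u$, all edges from $u$ to $K\setminus\{z_{c_u}\}$. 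Correctness is immediate: any proper $p$-coloring of $G$ uses all $p$ colors on $K$, so after relabelling $z_j$ gets color $j$, and then every $u\in U'$ is adjacent to every color class of $K$ except $z_{c_u}$ and is forced to color $c_u$; conversely, any proper $p$-coloring of $H'$ extending $c$ extends to $G$ by setting $c(z_j)=j$. To realize $G$ with at most $2\ell$ gaps, I would place $H'$'s representation on $[0,M]$, arrange $K$ as the staircase $z_j=[N+j,\,N+j+p-1]$ for some $N>M$, and for each $u\in U'$ add a left extension $[M+1,\,N+c_u-\tfrac12]$ that meets exactly $z_1,\dots,z_{c_u-1}$ and a right extension $[N+c_u+p-\tfrac12,\,N+2p-1]$ that meets exactly $z_{c_u+1},\dots,z_p$. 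The total gap count is then $(\ell-\ell')+2\ell'=\ell+\ell'\le 2\ell$.

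The main obstacle is a spurious-adjacency issue: the left extensions of different precolored vertices all share the buffer just past $M$ and pairwise overlap, and likewise all right extensions overlap, so the representation introduces edges between precolored vertices absent from $H'$. The preprocessing step is precisely what makes these added edges harmless: once the precoloring is injective, any two distinct precolored vertices already carry distinct forced colors, so any such added edge merely connects vertices that must anyway receive different colors in any valid extension. One still has to verify that left and right extensions of distinct vertices do not interact (they are separated by $K$, so they do not) and that the extensions do not overlap the wrong intervals of $K$ (which the staircase arrangement guarantees by construction). With these pieces in place, the composition yields a parameterized reduction that establishes the W[1]-hardness of \COL on \kgigs parameterized by $k$.
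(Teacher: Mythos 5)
Your proposal is correct and starts from the same source problem as the paper — Marx's \W[1]-hardness of \textsc{Precoloring Extension} on interval graphs, parameterized by the number of precolored vertices, with a $p$-clique gadget that forces the precolors — but the implementation genuinely differs. The paper's proof simply takes the disjoint union of $G$ and $K_p$ and, for each color class $X_i$, identifies all vertices of $X_i$ with $v_i$; Observation~\ref{obs:ident} then certifies immediately that the result is a \kgig with at most $k=|U|$ gaps, and no explicit \mir, no spurious-edge analysis, and no injectivity preprocessing are needed (identification handles repeated colors for free, since a precolored vertex \emph{becomes} the clique vertex of its color rather than being joined to the complement of it). Your version keeps the precolored vertices separate, joins each $u$ to $K\setminus\{z_{c_u}\}$, and pays for it: you must first make the precoloring injective, argue that the unavoidable overlaps among left (resp.\ right) extensions only create edges between vertices that are forced to distinct colors anyway, and hand-build the staircase representation — where, incidentally, your right-extension formula degenerates to an empty interval when $c_u=p$ (and the left one is vacuous when $c_u=1$), so those extensions should simply be omitted. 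Both reductions are valid parameterized reductions (yours with parameter at most $2\ell$ rather than exactly $k$, which is immaterial); what the paper's identification trick buys is that the entire interval-representation bookkeeping collapses into a one-line appeal to Observation~\ref{obs:ident}.
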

\begin{proof}
We reduce from the {\sc Precoloring Extension} problem.
Marx~\cite{Marx06} proved that {\sc Precoloring Extension} 
is \W[1]-hard on interval graphs, parameterized by the number of precolored vertices.
Let $G=(V,E)$ be an interval graph with a set of precolored vertices $U\subseteq V$ and a precoloring $c\colon U\rightarrow\{1,\ldots,p\}$. 
Let $k=|U|$, and denote by $X_1,\ldots,X_p$ (some sets can be empty) the partition of $U$ into the color classes induced by $c$. We construct the graph $H$ as follows:
\begin{itemize}
\item construct a disjoint union of $G$ and a complete graph $K_p$ with the vertices $v_1,\ldots,v_p$;
\item for each $i\in\{1,\ldots,p\}$, identify all the vertices of $X_i$ and $v_i$.
\end{itemize}
By Observation~\ref{obs:ident}, $H$ is a \kgig. It remains to observe that $H$ has a $p$-coloring if and only if $c$ can be extended to a $p$-coloring of $G$.
\qed \end{proof}

\section{Conclusion}

While multiple interval graphs have a large number of applications, many problems remain intractable on $t$-interval graphs, even for small constant $t$.
On the other hand, the total number of gaps, $k$, in a \mir seems to be a more useful parameterization of problems on multiple interval graphs. Indeed, we have
seen that this parameter captures some of the intractibility of graph problems and the parameterization by $k$ of many problems turns out to
be \FPT.

While this first paper on the parameterization of graph problems by the total number of gaps
classifies some important problems as \FPT\ or \W[1]-hard, it raises more questions than it answers. There is the question of investigating other problems
that are polynomial time solvable on interval graphs but hard on $t$-interval graphs for small constant $t$. One example is \textsc{Hamiltonian Cycle}.
Further considerations worth investigating are kernelization algorithms and improvements on the running time of our (rather simple) algorithms.
The most important open problem for \kgigs is, in our eyes, to pinpoint the parameterized complexity of the recognition
problem. 


\emph{Acknowledgment.} We thank Mathieu Chapelle for interesting discussions about this work.

{
\bibliographystyle{serge-short}
\bibliography{shorttitles,kgapig}
}

\shortversion{
 \appendix\newpage
 
 \section{Omitted Proofs}
 \label{sec:appOmitted}

 \section{Polynomial-Space Algorithms for \IS, \CLIQUE, and \DS}
 \label{sec:appISDS}

}

\end{document}